
\documentclass[a4paper, number, sort&compress]{elsarticle}




\usepackage{hyperref}
\usepackage{microtype}
\usepackage{graphicx}
\usepackage{bbm}
\usepackage{enumerate}
\usepackage{booktabs}
\usepackage{tabu}
\usepackage[ruled,lined,boxed,linesnumbered]{algorithm2e}
\usepackage{amsfonts,amsthm,amsmath,amssymb,mathtools}
\usepackage{xcolor}
\usepackage{xspace}
\usepackage{float}
\usepackage{mathtools}
\usepackage{tabularx}
\usepackage{relsize}
\usepackage[czech,english]{babel}
\usepackage{todonotes}

\newtheorem{Reduction Rule}{Reduction Rule}

\newtheorem{theorem}{Theorem}
\newtheorem{lemma}{Lemma}
\newtheorem{corollary}{Corollary}
\newtheorem{claim}{Claim}

\usepackage{microtype}
\usepackage{xspace}


\newcommand{\stpath}{{$s$-$t$ path}\xspace}
\newcommand{\stpaths}{{$s$-$t$ paths}\xspace}

\newcommand{\sstpaths}{{shortest $s$-$t$ paths}\xspace}

\newcommand{\tsp}{{\sc Tracking Shortest Paths}\xspace}
\newcommand{\tpdag}{{\sc Tracking Paths in DAG}\xspace}
\newcommand{\tp}{{\sc Tracking Paths}\xspace}

\newcommand{\N}{\mathbb{N}}

\newcommand{\decprob}[3]{
  \vspace{2mm}
\noindent\fbox{
  \begin{minipage}{0.96\textwidth}
  \textsc{#1}\\
    \textbf{Input:} #2 \\
  \textbf{Question:} #3
  \end{minipage}
  }
  \vspace{2mm}
}

\newcommand\putabove[2]{\mathrel{\overset{\makebox[0pt]{\mbox{\normalfont\tiny\sffamily #1}}}{#2}}}


\usepackage[displaymath]{lineno}
\newcommand*\patchAmsMathEnvironmentForLineno[1]{%
  \expandafter\let\csname old#1\expandafter\endcsname\csname #1\endcsname
  \expandafter\let\csname oldend#1\expandafter\endcsname\csname end#1\endcsname
  \renewenvironment{#1}%
     {\linenomath\csname old#1\endcsname}%
     {\csname oldend#1\endcsname\endlinenomath}}%
\newcommand*\patchBothAmsMathEnvironmentsForLineno[1]{%
  \patchAmsMathEnvironmentForLineno{#1}%
  \patchAmsMathEnvironmentForLineno{#1*}}%
\AtBeginDocument{%
\patchBothAmsMathEnvironmentsForLineno{equation}%
\patchBothAmsMathEnvironmentsForLineno{align}%
\patchBothAmsMathEnvironmentsForLineno{flalign}%
\patchBothAmsMathEnvironmentsForLineno{alignat}%
\patchBothAmsMathEnvironmentsForLineno{gather}%
\patchBothAmsMathEnvironmentsForLineno{multline}%
}

\journal{Information Processing Letters}

\begin{document}

\begin{frontmatter}



\title{Polynomial Kernels for Tracking Shortest Paths\tnoteref{t1}}


\author{V\'aclav Bla\v zej}
\ead{vaclav.blazej@fit.cvut.cz}

\author{Pratibha Choudhary}
\ead{pratibha.choudhary@fit.cvut.cz}

\author{Du{\v s}an Knop}
\ead{dusan.knop@fit.cvut.cz}

 \author{Jan Maty{\'a}{\v s} K{\v r}i{\v s}{\v t}an\corref{cor1}}
\ead{kristja6@fit.cvut.cz}

\author{Ond\v rej~Such\'{y}}
\ead{ondrej.suchy@fit.cvut.cz}

\author{Tom{\'a}{\v s} Valla}
\ead{tomas.valla@fit.cvut.cz}

\cortext[cor1]{Corresponding author; kristja6@fit.cvut.cz}

\address{%
    Department of Theoretical Computer Science, Faculty of Information Technology, \\
    Czech Technical University in Prague, Th{\'a}kurova 9, Prague, 160\,00, Czech Republic%
}


\begin{abstract}
 Given an undirected graph $G=(V,E)$, vertices $s,t\in V$, and an integer $k$, \tsp requires deciding whether there exists a set of $k$ vertices $T\subseteq V$ such that for any two distinct shortest paths between $s$ and $t$, say $P_1$ and $P_2$, we have $T\cap V(P_1)\neq T\cap V(P_2)$.
 In this paper, we give the first polynomial size kernel for the problem.
 Specifically we  show the existence of a kernel with $\mathcal{O}(k^2)$ vertices and edges in general graphs and a kernel with $\mathcal{O}(k)$ vertices and edges in planar graphs for the \tpdag problem.
This problem admits a polynomial parameter transformation to \tsp, and this implies a kernel with $\mathcal{O}(k^4)$ vertices and edges for \tsp in general graphs and a kernel with $\mathcal{O}(k^2)$ vertices and edges in planar graphs.
Based on the above we also give a single exponential algorithm for \tsp in planar graphs.
\end{abstract}



\begin{keyword}
tracking paths \sep shortest paths \sep directed acyclic graphs \sep kernel



\end{keyword}

\end{frontmatter}


\section{Introduction}\label{sec:introduction}

Given a graph $G$ with a source vertex $s$, a destination vertex $t$, and a budget~$k$, the \tsp problem is defined as follows.
Find a set of vertices $T \subseteq V(G)$ (\emph{tracking set}) with at most~$k$ vertices such that the set of vertices of $T$ encountered in each distinct shortest path between $s$ and $t$ (\emph{\stpath}) contains is a unique subset of $T$.
The purpose of finding a tracking set is to be able to distinguish between any two \sstpaths in the graph by using (preferably small number of) vertices in those paths.
The problem is known to be NP-complete~\cite{BanikKPS20} and FPT when parameterized by the size of the solution \cite{caldam18,tr1-j}.

A closely related problem is that of \tp, where the aim is to find a tracking set that distinguishes between all \stpaths and not just the \sstpaths.
The \textsc{Tracking Paths} problem admits a quadratic kernel for general graphs~\citep{quad} and a linear kernel for planar graphs~\citep{ep-planar} when parameterized by the solution size i.e., the size of tracking set.
However, so far no polynomial kernels have been known for \tsp.

In this paper, we fill this gap by computing a kernel with $\mathcal{O}(k^2)$ vertices and edges for \tpdag and a kernel with $\mathcal{O}(k)$ vertices and edges for \tpdag in planar graphs.
The kernel of size $O(k)$ leads to the first single exponential algorithm known for the \tsp in planar graphs.
It is known that \tsp is reducible to \tpdag~\cite{tr1-j}.
We show that \tpdag admits a polynomial parameter transformation to \tsp (see Section~\ref{sec:ppt}), which implies a kernel for for \tsp with size square of the size of that for \tpdag.

\paragraph{Preliminaries}
The \tsp problem is formally defined as follows.

\decprob{\tsp}{An undirected graph $G=(V,E)$, source $s \in V$, destination $t\in V$, and a budget $k \in \N$.}{Is there a set of vertices $T \subseteq V$, with $|T| \le k$, such that for any two distinct \sstpaths $P_1$ and $P_2$ in $G$, $T \cap V(P_1) \neq T \cap V(P_2)$?}

The set of vertices comprising the output is referred to as a \textit{tracking set}, and the vertices in a \textit{tracking set} are called \textit{trackers}.
We consider the parameterized version of the problem with respect to the desired solution size $k$.

We consider only unweighted graphs.
We use $G=(V,E)$ to denote a graph with vertex set $V$ and edge set $E$.
The neighborhood of a vertex $v$ comprises the vertices adjacent to $v$ and is denoted by $N(v)$.
The degree of a vertex $v$ is denoted by $\deg(v)$ and is equal to the size of the neighborhood of the vertex~$v$.
In the case of directed graphs, for a vertex $v$, $N^+(v)$ denotes the set of out-neighbors of $v$, $N^-(v)$ denotes the in-neighbors of $v$, and $\deg^+(v)=|N^+(v)|$, $\deg^-(v)=|N^-(v)|$.
A directed acyclic graph is a directed graph without any directed cycles.
For a set of edges $\{(u_1,v_1),\dots,(u_\ell,v_\ell)\}$ we call the vertices $\{v_1,\dots,v_\ell\}$ heads and $\{u_1,\dots,u_\ell\}$ tails.
For more details on graph theory and related notations, we refer to~\citet{diestel} and for details on parameterized complexity, we refer to~\citet{book}.

\section{Common Preprocessing}

In this section, we discuss some preliminary preprocessing that is performed on the input graph.
For both of our kernelization algorithms, we start by exhaustively applying the following reduction rule.

\begin{Reduction Rule}[\citet{tr1-j}]\label{red:useful-dag}
 If there exists a vertex or edge in $G$ that does not participate in any shortest \stpath, delete it.
\end{Reduction Rule}

Next, we reduce the input instance to an instance of \tpdag{} by directing the edges away from $s$.
Note that edges that have both endpoints in the same distance from $s$ do not participate in any shortest \stpath.
Furthermore, any shortest \stpath{} becomes a directed \stpath.

We proceed with applying two more reduction rules which are applicable on directed acyclic graphs.

\begin{Reduction Rule}[{\citet{tr1-j}}]\label{red:deg-one-st}
 If $\deg(s)=1$ and $u\in N^+(s)$, then delete~$s$ and set $s=u$.
 If $\deg(t)=1$ and $v\in N^-(t)$, then delete $t$ and set $t=v$.
\end{Reduction Rule}

\begin{Reduction Rule}[{\citet{tr1-j}}]\label{red:deg-2-dag}
 If there exist $x,y,z\in V(G)$, and $(x,y),(y,z)\in E(G)$, and $\deg(x)=\deg(y)=2$, then delete the vertex $y$ and introduce the edge $(x,z)$ in $G$.
\end{Reduction Rule}

After applying Reduction Rules~\ref{red:useful-dag}--\ref{red:deg-2-dag} all vertices and edges of the input graph appear in some \stpath, there are no degree-one vertices, and there are no adjacent degree-two vertices, unless one of them is $s$ or $t$.

\section{Quadratic Kernel for Directed Acyclic Graphs}\label{sec:kernel}

In this section, we give an algorithm to compute an $\mathcal{O}(k^2)$ kernel for \tpdag.
Let $(G=(V,E),s,t,k)$ be an instance of \tpdag reduced with respect to Reduction Rules~\ref{red:useful-dag}--\ref{red:deg-2-dag}.

Let $T$ be a tracking set for $G$.
For a vertex $x \in V \setminus \{t\}$ let $Z_x$ be the set of vertices $z$ such that there is a directed path from $x$ to $z$ in $G \setminus \big((T \cup \{t\})\setminus \{x\}\big)$.
Note that $x \in Z_x$.
Let $B_x$ be the set of vertices $b$ of $T \cup \{t\}$ such that there is a directed path from $x$ to $b$ in $G \setminus \big((T \cup \{t\}) \setminus \{x,b\}\big)$.

\begin{lemma}\label{lem:below_bound}
 If $T$ is a tracking set, then for every $x \in V \setminus \{t\}$ we have, \\ $\sum_{z \in Z_x} (\deg^+(z) - 1) \le |B_x|$.
\end{lemma}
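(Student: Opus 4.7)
The plan is to translate the tracking-set hypothesis into a strong uniqueness-of-paths property inside $Z_x$, and then read off the bound by counting edges.

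First I would observe the basic structural fact that every out-neighbor $w$ of a vertex $z \in Z_x$ lies in $Z_x \cup B_x$: the path from $x$ to $z$ in $G \setminus ((T \cup \{t\}) \setminus \{x\})$ can be extended by the edge $(z,w)$, so $w \in Z_x$ when $w \notin T \cup \{t\}$ and $w \in B_x$ when $w \in T \cup \{t\}$. Call a directed $x$-to-$w$ path \emph{internal} if its interior lies entirely in $Z_x$, and let $N_w$ denote the number of internal paths ending at $w$.

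The central step is to prove $N_w = 1$ for every $w \in Z_x \cup B_x$. For $w \in B_x$, suppose two distinct internal paths $Q_1, Q_2$ existed. By Reduction Rule~\ref{red:useful-dag} one can prepend an $s$-to-$x$ directed path $Q_0$ and append a $w$-to-$t$ directed path $Q_3$, obtaining two distinct directed \stpaths $Q_0Q_1Q_3$ and $Q_0Q_2Q_3$. Since the interiors of the $Q_i$ lie in $Z_x\setminus\{x\}$ and $Z_x\cap(T\cup\{t\})\subseteq\{x\}$, these two \stpaths intersect $T$ in exactly the same set, contradicting that $T$ is a tracking set. For $w\in Z_x$ I would reduce to this case via an \emph{exit path}: starting at $w$ and following out-edges, acyclicity together with $\deg^+(v)\ge 1$ for every $v\ne t$ and the partition above forces the walk to eventually leave $Z_x$ into some $b\in B_x$. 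Appending this common exit path to two candidate internal paths to $w$ would yield two distinct internal paths to $b$, contradicting the previous case.

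Once uniqueness is established, the recursion $N_w=\sum_{y\in N^-(w)\cap Z_x}N_y$ forces $|N^-(w)\cap Z_x|=1$ for every $w\in (Z_x\cup B_x)\setminus\{x\}$, while $|N^-(x)\cap Z_x|=0$ by acyclicity. Since all out-edges of $Z_x$-vertices terminate in $Z_x\cup B_x$, summing in-degrees yields
\[
 \sum_{z\in Z_x}\deg^+(z) \;=\; (|Z_x|-1)+|B_x|,
\]
hence $\sum_{z\in Z_x}(\deg^+(z)-1) = |B_x|-1 \le |B_x|$. The delicate part is the uniqueness argument for $w \in Z_x$, as it depends on always being able to construct an exit path; this is guaranteed precisely by acyclicity combined with the $\deg^+\ge 1$ property that the preprocessing rules ensure.
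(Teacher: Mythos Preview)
Your argument is correct and rests on the same core contradiction as the paper's, but you package it more elaborately than necessary. The paper proceeds directly at the level of edges: it lets $E_x$ be the set of edges with tail in $Z_x$, shows (Claim~1) that no two edges of $E_x$ share a head (else two $x$-to-$y$ paths through $Z_x$ would yield two \stpaths with identical tracker sets), and (Claim~2) that every head lies in $Z_x\cup B_x$. Since $x$ itself is never a head in a DAG, $|E_x|\le |Z_x|-1+|B_x|$ follows immediately. Your path-counting detour (proving $N_w=1$, invoking the recursion $N_w=\sum N_y$, and only then extracting $|N^-(w)\cap Z_x|=1$) recovers exactly the same in-degree bound, but the intermediate uniqueness statement for $w\in Z_x$ and the exit-path construction are not needed---Claim~1 applies to \emph{any} head $y$, whether in $Z_x$ or $B_x$, with the same two-edge argument.

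One small imprecision: your displayed equality $\sum_{z\in Z_x}\deg^+(z)=(|Z_x|-1)+|B_x|$ silently assumes $Z_x\cap B_x=\emptyset$, but when $x\in T$ we have $x\in Z_x\cap B_x$, so the right-hand side overcounts by~$1$. This is harmless for the stated inequality (indeed it sharpens it), but the equality as written is not always correct.
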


\begin{proof}
 Let $E_x$ be the set of edges with tails in $Z_x$.
 Then observe that $|E_x| = \sum_{z \in Z_x} \deg^+(z)$.
 \begin{claim}
  If there are two edges in $E_x$ with the same vertex as the head, then $T$ is not tracking.
 \end{claim}

 \begin{proof}
  Assume that there are two edges $(z,y)$ and $(z',y)$ in $E_x$.
  By the definition of $Z_x$ there is a directed path $P_z$ from $x$ to $z$ that does not contain any vertex of $(T \cup \{t\})\setminus \{x\}$ and a directed path $P_{z'}$ from $x$ to $z'$ also does not contain any vertex of $(T \cup \{t\})\setminus \{x\}$.
  Due to application of Reduction Rule~\ref{red:useful-dag} each vertex and edge participate in an \stpath.
  Therefore, there is a directed path $P_y$ from $y$ to $t$ and a directed path $P_x$ from $s$ to $x$.
  Let $P_1$ be the \stpath obtained as a concatenation of $P_x$, $P_z$, $(z,y)$, $P_y$ and $P_2$ be the \stpath obtained as a concatenation of $P_x$, $P_{z'}$, $(z',y)$, $P_y$.
  These two paths differ only in the subpaths $P_z$, $(z,y)$ and $P_{z'}$, $(z',y)$ and no internal vertex of these subpaths is in $T$.
  Hence, $P_1$ and $P_2$ are two distinct paths that contain the same set of trackers and $T$ is not tracking.
 \end{proof}

 \begin{claim}
  If $y$ is the head of an edge of $E_x$ that is not in $Z_x$, then it is in $B_x$.
 \end{claim}

 \begin{proof}
  Suppose that $(z,y) \in E_x$ and $y \notin Z_x \cup B_x$.
  According to the definition of $Z_x$, there is a directed path $P_z$ from $x$ to $z$ that does not contain any vertex of $(T \cup \{t\})\setminus \{x\}$.
  Let $P_y$ be the path obtained by concatenating $P_z$ with $(z,y)$.
  If $y \notin T \cup \{t\}$, then $P_y$ is a directed path from $x$ to $y$ that does not contain any vertex of $(T \cup \{t\})\setminus \{x\}$, contradicting $y \notin Z_x$.
  Hence, $y \in T \cup \{t\}$ and $P_y$ prove that $y\in B_x$.
 \end{proof}

 By the above claims and since $x \in Z_x$ is not a head of any edge in $E_x$, we have $\sum_{z \in Z_x} \deg^+(z) = |E_x| \le |Z_x| -1 + |B_x|$ 
 and the lemma follows.
\end{proof}

\begin{lemma}\label{lem:sum_outdegree}
 If $T$ is a tracking set, then $\sum_{z \in V} (\deg^+(z)-1) \le {(|T|+1)}^2 - 1$.
\end{lemma}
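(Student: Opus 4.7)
The plan is to sum the bound of Lemma~\ref{lem:below_bound} over a family of $x$'s whose $Z_x$ sets cover $V \setminus \{t\}$. The natural choice is to take $x$ ranging over $(T \cup \{s\}) \setminus \{t\}$. To verify coverage, let $v \in V \setminus \{t\}$ be arbitrary; by Reduction Rule~\ref{red:useful-dag}, $v$ lies on some shortest \stpath, and the prefix from $s$ to $v$ of this path avoids $t$. Taking $x$ to be the last vertex of $T \cup \{s\}$ that appears on this prefix (such an $x$ exists because $s$ itself qualifies, and it is not $t$ since $t$ is not on the prefix), the subpath from $x$ to $v$ avoids $(T \cup \{t\}) \setminus \{x\}$ entirely, placing $v$ in $Z_x$.

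Next, I would turn this covering into a partition by assigning each $v \in V \setminus \{t\}$ to exactly one $x(v) \in (T \cup \{s\}) \setminus \{t\}$ with $v \in Z_{x(v)}$ and setting $Z'_x = \{v : x(v) = x\} \subseteq Z_x$. To apply Lemma~\ref{lem:below_bound} on the sets $Z_x$ (rather than on the smaller $Z'_x$), I need the summands $\deg^+(v) - 1$ to be nonnegative for every $v \ne t$. This holds because Reduction Rule~\ref{red:useful-dag} guarantees that every vertex participates in some shortest \stpath, so every non-$t$ vertex has at least one outgoing edge. The chain of inequalities then reads $\sum_{v \ne t}(\deg^+(v)-1) = \sum_x \sum_{v \in Z'_x}(\deg^+(v)-1) \le \sum_x \sum_{v \in Z_x}(\deg^+(v)-1) \le \sum_x |B_x|$, where $x$ ranges over $(T \cup \{s\}) \setminus \{t\}$.

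Finally, I would use the trivial bounds $|B_x| \le |T \cup \{t\}| \le |T|+1$ and $|(T \cup \{s\}) \setminus \{t\}| \le |T|+1$ to conclude $\sum_x |B_x| \le (|T|+1)^2$. Adding the contribution $\deg^+(t) - 1 = -1$ coming from the vertex $t$ itself then produces the claimed bound $(|T|+1)^2 - 1$. I expect the main subtlety to lie in the covering and partition step: one must certify that for every $v \ne t$ there is a valid index $x \in T \cup \{s\}$ distinct from $t$, and also recognise that the seemingly innocent nonnegativity of $\deg^+(v) - 1$ is precisely what licences the passage from the partition $\{Z'_x\}$ back to the (possibly overlapping) sets $\{Z_x\}$ to which Lemma~\ref{lem:below_bound} applies.
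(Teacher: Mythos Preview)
Your proof is correct and follows essentially the same route as the paper: cover $V\setminus\{t\}$ by the sets $Z_x$ for $x\in T\cup\{s\}$, invoke Lemma~\ref{lem:below_bound} on each, bound $|B_x|\le|T|+1$, and absorb the $-1$ from $\deg^+(t)$. Your explicit partition into $Z'_x$ and your exclusion of $t$ from the index set are harmless refinements; the paper simply uses the covering directly together with the (implicit) nonnegativity of $\deg^+(z)-1$ for $z\neq t$.
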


\begin{proof}
 We claim that $V \setminus \{t\} = \bigcup_{x \in T \cup \{s\}} Z_x$.
 For a vertex $z \in V \setminus \{t\}$, let $P_z$ be an \stpath containing $z$.
 Let $x$ be the last vertex of $T \cup \{s\}$ on $P_z$.
 Then $z \in Z_x$.
 Hence,
 \begin{align*}
  \sum_{z \in V} \left(\deg^+(z)-1\right) &\le \deg^+(t)-1 +\!\!\!\sum_{x \in T \cup \{s\}}\sum_{\,z\in Z_x}\left(\deg^+(z)-1\right) \\
                                          &\putabove{(a)}{\le} -1 + \;\smashoperator{\sum_{x \in T \cup \{s\}}} |B_x| \\
                                          &\putabove{(b)}{\le} -1 + \;\smashoperator{\sum_{x \in T \cup \{s\}}}\big(|T|+1\big) \\
                                          &\le {(|T|+1)}^2 - 1.
 \end{align*}
 Inequality~\mbox{(a)} comes from Lemma~\ref{lem:below_bound} and Inequality~\mbox{(b)} comes from the fact that $B_x \subseteq T \cup \{t\}$.
\end{proof}

\begin{lemma}\label{lem:final-quad}
 If $(G=(V,E),s,t,k)$ is a YES instance of \tpdag after applying the Reduction Rules~\ref{red:useful-dag}--\ref{red:deg-2-dag}, then $|V|\leq 5{(k+1)}^2$ and $|E| \leq 6{(k+1)}^2$.
\end{lemma}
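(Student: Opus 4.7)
Lemma~\ref{lem:sum_outdegree} already gives $|E| - |V| = \sum_{v \in V}(\deg^+(v)-1) \le (k+1)^2 - 1$, so the edge bound will follow at once from a bound on $|V|$. I would partition $V$ into $\{s,t\}$, the set $V_3$ of internal vertices of degree at least $3$, and the set $V_2$ of internal degree-$2$ vertices (which, after the DAG orientation, satisfy $\deg^+(v) = \deg^-(v) = 1$), and bound $|V_3|$ and $|V_2|$ separately.

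To bound $|V_3|$, I would notice that in the sum $\sum_{v}(\deg^+(v)-1) \le (k+1)^2 - 1$ only $t$ contributes a negative term (namely $-1$), each vertex with $\deg^+ = 1$ contributes $0$, and each vertex with $\deg^+ \ge 2$ contributes at least $1$. This immediately yields $|\{v : \deg^+(v) \ge 2\}| \le (k+1)^2$. The symmetric in-degree bound $|\{v : \deg^-(v) \ge 2\}| \le (k+1)^2$ follows by applying Lemma~\ref{lem:sum_outdegree} to the DAG with all edges reversed: its shortest $(t,s)$-paths are exactly the reverses of the original shortest $s$-$t$ paths, so $T$ is still a tracking set and every vertex/edge still lies on some such path. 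Since every $v \in V_3$ satisfies $\deg^+(v) \ge 2$ or $\deg^-(v) \ge 2$, and since $s$ and $t$ respectively belong to these two sets but not to $V_3$, I would obtain $|V_3| \le 2(k+1)^2 - 2$.

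To bound $|V_2|$, I would invoke Reduction Rule~\ref{red:deg-2-dag}, which guarantees that $V_2$ is an independent set in the undirected sense. Hence every $v \in V_2$ has both its in- and out-neighbour in $V \setminus V_2$. Counting the $|V_2|$ in-edges of $V_2$ against the out-degrees of $V \setminus V_2$, and using $\sum_{u \in V}\deg^+(u) = |E|$ together with $\deg^+(v) = 1$ for each $v \in V_2$, I get $|V_2| \le |E| - |V_2|$, i.e., $2|V_2| \le |E|$. Combining with $|E| \le |V| + (k+1)^2 - 1 = 2 + |V_3| + |V_2| + (k+1)^2 - 1$ and the bound on $|V_3|$ gives $|V_2| \le |V_3| + (k+1)^2 + 1 \le 3(k+1)^2 - 1$.

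Summing, $|V| = 2 + |V_3| + |V_2| \le 2 + (2(k+1)^2 - 2) + (3(k+1)^2 - 1) = 5(k+1)^2 - 1 \le 5(k+1)^2$, and then $|E| \le |V| + (k+1)^2 - 1 \le 6(k+1)^2 - 2 \le 6(k+1)^2$, as required. The only non-routine point is justifying the symmetric in-degree bound by edge reversal; everything else is bookkeeping.
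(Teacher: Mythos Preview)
Your proof is correct and rests on the same two ingredients as the paper's: Lemma~\ref{lem:sum_outdegree} (plus its edge-reversed twin) to bound the high-degree vertices, and Reduction Rule~\ref{red:deg-2-dag} to control the degree-$2$ vertices. The only real difference is in the final accounting. The paper first classifies \emph{edges} according to whether an endpoint has $\deg^+\ge 2$ or $\deg^-\ge 2$, sums four edge counts to get $|E|\le 6(k+1)^2$, and then backs out the number of degree-$2$ vertices from the edge count. You instead exploit the identity $\sum_v(\deg^+(v)-1)=|E|-|V|$ to turn Lemma~\ref{lem:sum_outdegree} into the single relation $|E|\le |V|+(k+1)^2-1$, bound $|V_3|$ and $|V_2|$ directly, and read off $|E|$ at the end. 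Your route is a bit more streamlined and even yields the slightly sharper $|V|\le 5(k+1)^2-1$ and $|E|\le 6(k+1)^2-2$. The symmetry step (applying Lemma~\ref{lem:sum_outdegree} to the reversed DAG) is also used implicitly in the paper's ``by a symmetrical argument'' clause, so your justification there is exactly what is needed.
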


\begin{proof}
 Let $T$ be a tracking set of size at most $k$ in $G$.
 Let $V_2$ be the set of vertices with out-degree at least $2$.
 Note that all other vertices have out-degree at least $1$, except for $t$, which has out-degree $0$.
 We have
 \begin{align*}
  |V_2| &= \sum_{z \in V_2} 1 \\
        &\le \sum_{z \in V_2} (\deg^+(z)-1) \\
        &\putabove{(a)}{\le} 1+\sum_{z \in V} (\deg^+(z)-1) \\
        &\putabove{(b)}{\le} 1+{(|T|+1)}^2-1 \\
        &= {(|T|+1)}^2.
 \end{align*}
 Inequality~\mbox{(a)} comes from $V_2 \subseteq V$ and adds $1$ for vertex $t$, \mbox{(b)} comes from Lemma~\ref{lem:sum_outdegree}.
 Therefore, there are at most $\sum_{z \in V_2} (\deg^+(z)) = |V_2| +\sum_{z \in V_2} (\deg^+(z)-1) \le |V_2|+1+\sum_{z \in V} (\deg^+(z)-1) \le 2{(|T|+1)}^2$ edges with tails in vertices with out-degree at least $2$.
 There are at most $|V_2| \le {(|T|+1)}^2$ edges with heads in vertices with out-degree at least $2$ and in-degree $1$.
 By a symmetrical argument, there are at most $2{(|T|+1)}^2$ edges with heads in vertices with in-degree at least $2$ and at most ${(|T|+1)}^2$ edges with tails in vertices with in-degree at least $2$ and out-degree $1$.

 As the graph is reduced by Reduction Rules~\ref{red:useful-dag}--\ref{red:deg-2-dag} every edge is incident to a vertex with in-degree at least $2$ or out-degree at least $2$.
 Hence, there are at most $6{(|T|+1)}^2 \le 6{(k+1)}^2$ edges in total.
 Each vertex with in-degree $1$ and out-degree $1$ is incident to an edge with tail in vertex with total degree at least $3$.
 Therefore, there are at most $3{(|T|+1)}^2$ vertices with in-degree $1$ and out-degree $1$ and, hence, at most $5{(|T|+1)}^2 \le 5{(k+1)}^2$ vertices in total.
\end{proof}

Thus, in the case of a YES instance, the number of vertices and edges in the graph can be bounded by $\mathcal{O}(k^2)$, where $k$ is the desired size of a tracking set.
After application of Reduction Rules~\ref{red:useful-dag}--\ref{red:deg-2-dag} if there are more than $5{(k+1)}^2$ vertices or more than $6{(k+1)}^2$ edges in the graph, then we return a NO instance, otherwise we have a kernel.
Since all reduction rules are applicable in polynomial time, our kernelization algorithm runs in polynomial time.

\begin{theorem}\label{thm:tsp_kernel_regular}
 \tpdag admits a kernel with $\mathcal{O}(k^2)$ vertices and edges, where $k$ is the size of a solution.
\end{theorem}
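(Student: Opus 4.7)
The plan is to assemble the kernelization algorithm from the preprocessing and the size bound already established, with only bookkeeping remaining.

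First I would take an arbitrary input $(G,s,t,k)$ of \tpdag and exhaustively apply Reduction Rules~\ref{red:useful-dag}--\ref{red:deg-2-dag}. Each rule either removes a vertex or an edge, so exhaustive application terminates after polynomially many steps; safeness of each rule (equivalence of instances before and after) is cited from \citet{tr1-j}, and each individual application is checkable in polynomial time (testing membership in some shortest \stpath reduces to two BFS/Dijkstra calls from $s$ and to $t$). After this phase, the reduced instance $(G',s',t',k)$ is equivalent to the original and satisfies the structural properties spelled out after Reduction Rule~\ref{red:deg-2-dag}: every edge and vertex lies on a directed \stpath in the DAG, there are no degree-one internal vertices, and no two adjacent internal vertices both have degree two.

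Next I would apply the size check supplied by Lemma~\ref{lem:final-quad}. If $|V(G')| > 5(k+1)^2$ or $|E(G')| > 6(k+1)^2$, then by the contrapositive of Lemma~\ref{lem:final-quad} the reduced instance, and hence the original one, admits no tracking set of size at most $k$. In that case I would output a fixed trivial NO-instance of constant size (for example, a DAG on four vertices with two internally disjoint \stpaths of equal length and $k=0$). Otherwise the reduced instance itself already satisfies $|V(G')|,|E(G')|\in\mathcal{O}(k^2)$ and can be returned as the kernel.

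Finally I would wrap up by verifying the running time: each reduction rule is polynomial and fires at most $|V|+|E|$ times; the size check is linear; and producing the trivial NO-instance is constant time. Combined with equivalence of instances at every step, this yields the claimed polynomial kernel. Since all the combinatorial work is already carried out in Lemmas~\ref{lem:below_bound}--\ref{lem:final-quad}, there is no real obstacle in this final step; the only care needed is to ensure the fallback NO-instance is syntactically a valid \tpdag instance so that the kernel is a total function on inputs.
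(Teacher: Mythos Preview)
Your proposal is correct and follows essentially the same route as the paper: apply Reduction Rules~\ref{red:useful-dag}--\ref{red:deg-2-dag}, invoke Lemma~\ref{lem:final-quad} for the size bound, output a trivial NO-instance if the bound is exceeded, and observe that all steps run in polynomial time. The only minor slip is that, in the \tpdag setting, Rule~\ref{red:useful-dag} should be read as removing vertices and edges not lying on any directed \stpath (rather than any \emph{shortest} \stpath), so the relevant polynomial check is plain reachability from $s$ and to $t$ rather than BFS/Dijkstra distance computations; this does not affect the argument.
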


\section{Linear Kernel for Planar Directed Acyclic Graphs}\label{sec:linear-kernel}


In this section, we give a linear kernel when the input is a planar graph.
The algorithm uses a set of reductions and a lemma of \citet[Lemma~4.6]{BanikKPS20}, which shows bounds for a slightly different setting.
Once again, we start by reducing the problem to an instance, say $(G,s,t,k)$, of \tpdag and applying Reduction Rules~\ref{red:useful-dag}--\ref{red:deg-2-dag}.


We use the term \textit{reduced graph} to refer to the graph on which Reduction Rule~\ref{red:useful-dag} has already been applied exhaustively.

\begin{lemma}[{\citet[Lemma~4.6]{BanikKPS20}}]\label{lem:bn-planar}
 The number of faces $|F|$ in a planar embedding of a reduced planar graph is at most $2\cdot {\rm OPT}$, where ${\rm OPT}$ is the number of trackers in an optimum tracking set for the graph.
\end{lemma}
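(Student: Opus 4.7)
The plan is to combine a per-face necessity argument with a charging scheme that blames each tracker for at most two faces, yielding the factor $2$.

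Step~1 (structural setup). Fix a planar embedding of the reduced graph $G$. Since every edge of $G$ lies on some \sstpath, the edges can be consistently oriented from $s$ toward $t$ by the distance-from-$s$ function, turning the reduced graph into a planar $s$-$t$ DAG. The standard observation for such embeddings is that the boundary of every face $f$ decomposes into two directed paths $P_L(f)$ and $P_R(f)$ sharing a common source $u_f$ and sink $v_f$, and both paths have the same length (each extends to a full \sstpath by the same prefix $s\leadsto u_f$ and suffix $v_f\leadsto t$).

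Step~2 (each face forces a tracker on its boundary). Let $T$ be an optimum tracking set. Suppose some face $f$ has no tracker in the interior of $P_L(f)\cup P_R(f)$, i.e.\ outside $\{u_f,v_f\}$. Concatenating the common prefix $s\leadsto u_f$, then $P_L(f)$ or $P_R(f)$, then the common suffix $v_f\leadsto t$ produces two distinct \sstpaths whose intersections with $T$ coincide, contradicting that $T$ is tracking. Hence every face of the embedding contains at least one tracker strictly inside its boundary.

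Step~3 (charging). For each face $f$, pick a canonical tracker $\tau(f)$ on $\partial f\setminus\{u_f,v_f\}$, for example the tracker minimising the distance from $s$, with ties broken by preferring $P_L(f)$. Given a tracker $w\in T$, I would argue that $w$ can play the role of $\tau(f)$ for at most two faces: using the $s$-$t$ DAG structure, the faces of the embedding incident to $w$ naturally split into those ``entering'' $w$ (where $w$ is the sink-side endpoint of a bounding subpath) and those ``leaving'' $w$, and only the at most two faces whose $u_f$ is the predecessor tracker reached from $w$ along its two sides of the embedding can select $w$ as the closest-to-source tracker. Summing gives $|F|\le 2|T|=2\cdot{\rm OPT}$.

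The main obstacle is making the charging in Step~3 watertight: one must use the planarity and the $s$-$t$ DAG orientation together to prove that no tracker is chosen as $\tau(f)$ for three distinct faces, and the tie-breaking convention has to be robust enough to handle vertices of high in- or out-degree where many face boundaries meet. The other two steps are essentially structural bookkeeping on the face decomposition of a planar $s$-$t$ DAG.
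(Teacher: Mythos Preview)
The paper does not give its own proof of this lemma: it is quoted from Banik et al.\ and used as a black box, with only the remark that orienting the edges leaves the face count unchanged. There is therefore nothing in the present paper to compare your argument against.

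On the argument itself, the step you flag as the ``main obstacle'' is in fact the easy one \emph{once} you have an st-planar embedding: in such an embedding every vertex $w\notin\{s,t\}$ is bimodal, i.e.\ its incoming edges occupy one contiguous arc of the rotation and its outgoing edges the complementary arc. Exactly two incident faces then see both an incoming and an outgoing edge at $w$; on every other incident face $w$ coincides with $u_f$ or $v_f$. Hence each tracker lies in $\partial f\setminus\{u_f,v_f\}$ for at most two faces, and the charging needs no tie-breaking at all. The genuine gap is earlier, in Step~1: the decomposition of every face boundary into two directed $u_f$--$v_f$ paths is exactly the defining property of an st-planar (upward) embedding, and a planar DAG with a single source and a single sink need not admit one. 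You have not argued that reduced shortest-path graphs always do; without that, a face may have several local sources and sinks, and both the two-path picture in Step~1 and the ``at most two faces per tracker'' bound in Step~3 break down.
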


We would like to note that although the above lemma was given for undirected graphs, it can be observed that orientation of edges does not affect the lemma statement.
Indeed, while reducing an instance of \tsp to \tpdag, the number of faces in the graph remains unchanged.

It follows from Lemma~\ref{lem:bn-planar} that if the number of faces in an embedding of a reduced planar graph exceeds $2k$, then the graph does not have a tracking set of size $k$.
Therefore, we have the following reduction rule.

\begin{Reduction Rule}\label{red:bound-faces}
 If a planar embedding of a reduced planar graph $G$ has more than $2k$ faces, then return a trivial NO instance.
\end{Reduction Rule}

Next, we introduce two more reduction rules.
These rules are simply directed variants of two rules given by~\citet{ep-planar}; hence the proofs of their correctness are omitted to avoid redundancy.

\begin{Reduction Rule}\label{red:dag-triangle}
 If there exists a vertex $v\in V(G)\setminus \{s,t\}$ with degree $2$ with incident edges $(u,v), (v,w)$, and $(u,w)\in E(G)$, then mark $v$ as a tracker and remove it.
 Set $k=k-1$.
\end{Reduction Rule}

\begin{Reduction Rule}\label{red:disjoint-paths-dag}
 If there exist two vertices $x$ and $y$ with degree $2$, adjacent to the same pair of vertices $u$ and $v$, i.e., $(u,x),(u,y),(x,v)$, and $(y,v)\in E(G)$, then mark $x$ as a tracker and delete it.
 Set $k=k-1$.
\end{Reduction Rule}

In the case of Reduction Rule~\ref{red:dag-triangle}, there exists an \stpath that contains the vertex $v$, say $P_v$, and so there also exists another \stpath with the same vertex set as that of $P_v$, except that edges $(u,v),(v,w)$ are replaced by the edge $(u,w)$. Hence, $v$ needs to be marked a tracker.
In the case of Reduction Rule~\ref{red:disjoint-paths-dag}, there is an \stpath containing vertex $x$, say $P_x$, and so there is another \stpath which has the same vertex set as $P_x$, except that the vertex $x$ is replaced by vertex $y$. Hence, either $x$ or $y$ necessarily needs to be marked as a tracker.

Now we prove the existence of a linear kernel.

\begin{lemma}\label{thm:planar-kernel}
 For a planar graph $G=(V,E)$ if $(G,s,t,k)$ is a YES instance of \tpdag reduced with respect to Reduction Rules~\ref{red:useful-dag}--\ref{red:disjoint-paths-dag}, then $|V| \leq 10k-10$ and $|E|\leq 12k-12$.
\end{lemma}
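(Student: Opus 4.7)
My plan is to bound $|V|$ by passing to an auxiliary simple planar graph $G'$ obtained by suppressing every (undirected) degree-$2$ vertex of $G$, and then to read off $|E|$ from Euler's formula and the face bound supplied by Reduction Rule~\ref{red:bound-faces}.

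Concretely, I would form $G'$ by replacing each path $u$--$v$--$w$ through a degree-$2$ vertex $v$ by a single edge $\{u,w\}$. Reduction Rule~\ref{red:deg-2-dag} forces the degree-$2$ vertices to be pairwise non-adjacent, so the suppression process is unambiguous and order-independent. Reduction Rule~\ref{red:dag-triangle} ensures the replacement edge $\{u,w\}$ is not already present in $G$, and Reduction Rule~\ref{red:disjoint-paths-dag} forbids two degree-$2$ vertices from producing the same replacement edge. Hence $G'$ is \emph{simple} and planar, every vertex of $G'$ retains the degree it had in $G$ (so at least $3$), and since suppressing a degree-$2$ vertex does not disturb the face structure of an embedding, $|F(G')|=|F(G)|\le 2k$ by Reduction Rule~\ref{red:bound-faces}. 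Note also that $G'$ inherits connectedness from $G$, which is itself connected because Reduction Rule~\ref{red:useful-dag} forces every vertex to lie on some $s$--$t$ path.

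Now I would apply Euler's formula twice. Writing $n_{\ge 3}=|V(G')|$, the degree-sum inequality $2|E(G')|\ge 3|V(G')|$ combined with Euler on $G'$ yields the standard planar bound $n_{\ge 3}\le 2|F(G')|-4\le 4k-4$. On the other hand, each degree-$2$ vertex of $G$ contributes a distinct edge of $G'$ (distinct by Reduction Rule~\ref{red:disjoint-paths-dag}, genuinely new by Reduction Rule~\ref{red:dag-triangle}), so the number $n_2$ of degree-$2$ vertices satisfies $n_2\le|E(G')|=n_{\ge 3}+|F(G')|-2\le(4k-4)+2k-2=6k-6$. Adding these estimates gives $|V|=n_{\ge 3}+n_2\le 10k-10$, and Euler's formula applied to $G$ itself delivers $|E|=|V|+|F|-2\le(10k-10)+2k-2=12k-12$.

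The one subtle point is the simplicity of $G'$, because Reduction Rule~\ref{red:dag-triangle} explicitly excludes $v\in\{s,t\}$ and so does not directly forbid a triangle whose apex is $s$ or $t$. I would dispose of this by observing that, after orientation, every edge of the reduced DAG runs from one BFS-layer from $s$ to the next; consequently no edge can join two out-neighbors of $s$ (both at distance $1$) and, symmetrically, no edge can join two in-neighbors of $t$. Thus the triangle configuration near $s$ or $t$ simply cannot occur, the $s,t$ exception in Reduction Rule~\ref{red:dag-triangle} is vacuous on this input class, and the simplicity argument for $G'$ goes through.
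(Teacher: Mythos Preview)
Your argument is essentially the paper's: suppress all degree-$2$ vertices to obtain an auxiliary planar graph $G'$ of minimum degree $3$, bound $|V(G')|$ by combining the handshaking inequality with Euler's formula and the face bound $|F|\le 2k$ from Reduction Rule~\ref{red:bound-faces}, bound the number of degree-$2$ vertices by $|E(G')|$, and add. Your derivation of $|E(G)|$ directly from Euler's formula on $G$ is a slight variant (the paper uses $|E(G)|\le 2|E(G')|$ instead) and yields the same constant.

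Your final paragraph, however, does not work as written. The lemma is stated for arbitrary planar \tpdag instances, not only those produced from \tsp, so there is no BFS layering to appeal to; and even for instances that do come from \tsp, Reduction Rule~\ref{red:deg-2-dag} contracts paths and thereby creates edges that skip layers, after which an out-neighbour of $s$ at original distance $1$ can very well be adjacent to an out-neighbour now at original distance $2$. The good news is that simplicity of $G'$ is not actually needed for the bound: Euler's formula and the handshaking inequality $2|E'|\ge 3|V'|$ hold verbatim for connected planar multigraphs, and suppressing a degree-$2$ vertex preserves the face count even when a parallel edge results. (The paper's own one-line justification via Reduction Rules~\ref{red:dag-triangle} and~\ref{red:disjoint-paths-dag} is equally silent about the $\{s,t\}$ exception, so you are in fact being more careful than the source here---just with the wrong fix.)
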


\begin{proof}
 Let $G$ be a reduced planar DAG and $F$ be the set of faces in some planar drawing of $G$.
 Note that, since the instance is reduced with respect to Reduction Rule~\ref{red:bound-faces}, we have $|F| \le 2k$.
 Let $V_{\geq 3}$ be the set of vertices with degree at least $3$ and $V_2$ be the set of vertices with degree equal to $2$.
 Recall that after exhaustive application of Reduction Rules~\ref{red:useful-dag}--\ref{red:deg-2-dag} there are no vertices with degree $1$ in $G$ and each vertex with degree $2$ has vertices with degree $3$ or more as its neighbors.

 First we construct an auxiliary graph $G'=(V',E')$ by short-circuiting all vertices in $V_2$ (the vertex is deleted and an edge is introduced between its neighbors).
 Due to Reduction Rules~\ref{red:dag-triangle} and \ref{red:disjoint-paths-dag}, the short-circuiting does not create parallel edges.
 Note that the number of faces $|F|$ in $G$ is the same as that in $G'$, that is, at most $2k$.
 Further, $|V'|=|V_{\geq 3}|$, $|V_2|\leq |E'|$, and $|E|\leq 2|E'|$.

 We now bound the size of $V_{\geq 3}$.
 Due to the Handshaking lemma and the fact that degree of all vertices in $G'$ is at least $3$, we get
 \begin{align}\label{eq:1}
  |E'|\geq 3|V'|/2.
 \end{align}
 Due to Euler's formula, we have
 \noindent
 \begin{equation*}
  |F| = |E'|-|V'|+2 \putabove{(\ref{eq:1})}{\geq} |V'|/2 + 2.
 \end{equation*}
 Hence,
 \noindent
 \[
  |V_{\geq 3}|=|V'| \leq 2(|F|-2) \putabove{Red.~\ref{red:bound-faces}}{\leq} 2(2k - 2) \leq 4k - 4.
 \]

 Thus, $|V_2|\leq |E'|=|F|+|V'|-2 \leq 2k + 4k-4-2=6k-6$.
 The total number of vertices in $G$ is $|V|=|V_2|+|V_{\geq 3}|\leq 6k-6+4k-4 = 10k-10$.
 Since $|E|\leq 2|E'|$,  $|E|\leq 12k-12$.
\end{proof}

Thus, after the application of Reduction Rules~\ref{red:useful-dag}--\ref{red:disjoint-paths-dag}, if there exist more than $10k-10$ vertices or $12k-12$ edges, we return a NO instance; otherwise, we have a kernel.
Since all reduction rules are applicable in polynomial time, our kernelization algorithm runs in polynomial time, therefore, Theorem~\ref{thm:tsp_kernel_planar} follows.

\begin{theorem}\label{thm:tsp_kernel_planar}
 \tpdag in planar graphs admits a kernel with $\mathcal{O}(k)$ vertices and edges.
\end{theorem}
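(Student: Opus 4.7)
The plan is to derive the theorem as a short wrap-up of Lemma~\ref{thm:planar-kernel}. First I would describe the kernelization algorithm: given an input $(G,s,t,k)$ of \tpdag with $G$ planar, exhaustively apply Reduction Rules~\ref{red:useful-dag}--\ref{red:disjoint-paths-dag}. Reduction Rules~\ref{red:useful-dag}--\ref{red:deg-2-dag} strictly decrease $|V|+|E|$ at each application; Reduction Rule~\ref{red:bound-faces} is a one-shot check against a fixed planar embedding that either rejects or leaves the instance untouched; and Reduction Rules~\ref{red:dag-triangle} and~\ref{red:disjoint-paths-dag} either decrement $k$ or shrink $G$. Hence the process terminates after polynomially many steps, each of which runs in polynomial time (including computing a planar embedding for the face count).

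After exhaustive reduction I would invoke the contrapositive of Lemma~\ref{thm:planar-kernel}: if the reduced graph still has more than $10k-10$ vertices or more than $12k-12$ edges, it cannot admit a tracking set of size at most $k$, so we return a trivial NO instance of constant size. Otherwise, the reduced instance is itself a kernel with $\mathcal{O}(k)$ vertices and edges, as required.

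The one subtle point, and the main thing I would be careful about, is that Lemma~\ref{thm:planar-kernel} requires the instance to be reduced with respect to \emph{all} of Reduction Rules~\ref{red:useful-dag}--\ref{red:disjoint-paths-dag} simultaneously. Since Reduction Rules~\ref{red:dag-triangle} and~\ref{red:disjoint-paths-dag} remove vertices and may expose new degree-one or degree-two configurations, applying them can reintroduce applicability of the earlier rules. To address this I would apply the rules in rounds, repeating the entire cycle until no rule is applicable; monotonicity of $|V|+|E|$ combined with the fact that $k$ cannot drop below $0$ guarantees that the fixed point is reached in polynomial time. With this fixed-point argument in place, the theorem follows immediately from Lemma~\ref{thm:planar-kernel}.
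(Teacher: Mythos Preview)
Your proposal is correct and follows essentially the same approach as the paper: apply Reduction Rules~\ref{red:useful-dag}--\ref{red:disjoint-paths-dag} exhaustively, invoke the contrapositive of Lemma~\ref{thm:planar-kernel} to reject instances that remain too large, and observe that everything runs in polynomial time. Your explicit fixed-point discussion (re-running earlier rules after Rules~\ref{red:dag-triangle} and~\ref{red:disjoint-paths-dag} may become applicable again) is in fact more careful than the paper's own treatment, which simply asserts that the rules are applicable in polynomial time and that the theorem follows.
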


Once a kernel is computed, a tracking set of size $k$ can be found by considering all subsets of vertices of size $k$, and verifying if any of them is a tracking set for the given graph using the polynomial time algorithm of Banik et al.~\cite{tr-j}.

Since we have at most $10k$ vertices in the kernel, we have $\sum_{i=1}^{k}\binom{10k}{i}$ subsets of size at most $k$.
We use \cite[Lemma~3.13]{exact_exp_algo} which states that
\[
 \sum_{i=1}^{\alpha n}\binom ni \le \left(\frac 1\alpha \right)^{\alpha n} \!\!\cdot \left( \frac 1{1-\alpha} \right)^{(1-\alpha)n}.
\]
for $\alpha \in (0,1)$.
By setting $\alpha = 1/10$ and $n=10k$ we get the following bound.
\[
 \sum_{i=1}^k\binom {10k}i
 \le \left(\frac 1{\frac 1{10}} \right)^k \cdot \left( \frac 1{1-\frac 1{10}} \right)^{(1-\frac 1{10})10k}
 \!\!= \left(\frac {{10}^{10}}{9^9} \right) ^k \approx 25.811747^k \le 26^k
\]
Thus, a tracking set can be found in $26^k\cdot n^{\mathcal{O}(1)}$ time in this case.
Note that the tracking set will also include  vertices that were marked as trackers and deleted in the intermediate steps of the algorithm.

\begin{corollary}\label{cor:single-exp}
 There exists a $26^k \cdot n^{\mathcal{O}(1)}$ time (FPT) algorithm for \tsp and \tpdag in planar graphs.
\end{corollary}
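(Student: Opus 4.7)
The plan is to combine the linear kernel from Theorem~\ref{thm:tsp_kernel_planar} with a brute-force search over candidate tracking sets, and for the \tsp version to invoke the reduction to \tpdag mentioned in the introduction. First I would apply the kernelization algorithm of Theorem~\ref{thm:tsp_kernel_planar} to the input planar instance in polynomial time, obtaining an equivalent instance whose underlying graph has at most $10k-10$ vertices. Some vertices may already have been marked as trackers and deleted during the application of Reduction Rules~\ref{red:dag-triangle} and~\ref{red:disjoint-paths-dag}; these should be remembered and included in the final tracking set output, with $k$ decremented accordingly.

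Next I would enumerate all vertex subsets of the kernel of size at most the (remaining) budget $k$, and for each candidate use the polynomial-time verifier of Banik et al.~\cite{tr-j} to decide whether it is a tracking set; if any candidate passes, we answer YES, otherwise NO. Correctness is immediate from the equivalence of instances guaranteed by the kernel and from the fact that Reduction Rules~\ref{red:dag-triangle} and~\ref{red:disjoint-paths-dag} force the deleted vertices into the solution.

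For the running time the only task is to bound the number of enumerated subsets. With $n' \le 10k$ vertices in the kernel, the count is $\sum_{i=0}^{k}\binom{n'}{i} \le \sum_{i=0}^{k}\binom{10k}{i}$, and applying \cite[Lemma~3.13]{exact_exp_algo} with $\alpha = 1/10$ on $n=10k$ yields the bound $(10^{10}/9^9)^k < 26^k$, exactly as in the computation preceding the corollary. Multiplying by the polynomial verification cost gives the claimed $26^k \cdot n^{\Oh(1)}$ total time for \tpdag on planar graphs.

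Finally, to cover the \tsp case, I would first apply the polynomial-time reduction from \tsp to \tpdag of \citet{tr1-j}, which preserves planarity (the directions are assigned consistently with distance from $s$, so the embedding is unchanged) and preserves the parameter $k$, and then run the above procedure on the resulting planar DAG instance. The only mildly subtle point, and the one I would check carefully, is precisely this planarity preservation together with the fact that the parameter does not blow up under the reduction; once that is in hand the two algorithms are a single routine chaining of kernelization, reduction, and brute-force verification.
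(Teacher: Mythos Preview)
Your proposal is correct and mirrors the paper's argument essentially step for step: kernelize to at most $10k$ vertices, enumerate subsets of size at most $k$ bounded via \cite[Lemma~3.13]{exact_exp_algo} with $\alpha=1/10$, verify each candidate with the polynomial-time checker of Banik et al., and include the trackers removed by Reduction Rules~\ref{red:dag-triangle}--\ref{red:disjoint-paths-dag}. The only thing you spell out that the paper leaves implicit is the \tsp$\to$\tpdag reduction and its planarity preservation, but this is already part of the common preprocessing in Section~2 (orienting edges by distance from $s$), so there is no real difference.
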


\section{Polynomial Parameter Transformation}\label{sec:ppt}
It has been shown that \tsp is reducible to \tpdag~\citep{tr1-j}.
In this section, we show how to reduce \tpdag to \tsp.

\begin{lemma}\label{lem:subdivide}
 Let $(G,s,t,k)$ be an instance of\/ \tpdag, where $G$ does not contain parallel edges.
 Let $(u,w)$ be an edge of $G$, and let $G'$ be obtained from $G$ by subdividing the edge $(u,w)$, that is, by removing the edge $(u,w)$, introducing a new vertex $v$, and adding edges $(u,v)$ and $(v,w)$.
 Then $(G',s,t,k)$ is a YES instance of\/ \tpdag if and only if $(G,s,t,k)$ is a YES instance of\/ \tpdag.
\end{lemma}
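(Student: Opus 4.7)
The plan rests on the natural bijection $\phi$ between the $s$-$t$ paths of $G$ and the $s$-$t$ paths of $G'$: if a path $P$ uses the edge $(u,w)$, then $\phi(P)$ replaces this edge by the subpath $u \to v \to w$, and otherwise $\phi(P) = P$. Crucially, $V(\phi(P))$ equals $V(P) \cup \{v\}$ in the first case and $V(P)$ in the second.

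For the forward direction, a tracking set $T$ for $G$ satisfies $v \notin T$, so $T \cap V(\phi(P)) = T \cap V(P)$ for every $s$-$t$ path $P$. Distinct paths in $G'$ come from distinct paths in $G$ under $\phi$, and therefore retain distinct $T$-traces, showing that $T$ tracks $G'$ as well.

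For the backward direction, I would start with a tracking set $T'$ for $G'$ of size at most $k$. If $v \notin T'$ then $T = T'$ tracks $G$ by the same bijection. If $v \in T'$, let $T_0 = T' \setminus \{v\}$. The tracking property of $T'$ in $G'$ rigidly constrains the $T_0$-trace classes of $s$-$t$ paths in $G$: within each class there can be at most one path using $(u,w)$ and at most one avoiding it, since otherwise two paths in $G'$ would share a $T'$-trace. Hence every ``bad pair'' that $T_0$ fails to distinguish in $G$ pairs a $(u,w)$-using path $P$ with a $(u,w)$-avoiding path $Q$ of matching $T_0$-trace. I would then set $T = T_0 \cup \{y\}$ for a single vertex $y \in V(G)$ lying in the symmetric difference $V(P) \triangle V(Q)$ of every bad pair, yielding $|T| \le |T'|$.

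The main obstacle is establishing that such a universal distinguisher $y$ always exists. The natural candidates are $u$, $w$, or an internal vertex of an alternative $u$-to-$w$ subpath in $G$. Combining the tracking property of $T'$ on $G'$ with the acyclicity of $G$ (which forces internal vertices of alternate $u$-$w$ subpaths not to appear on other portions of competing paths), I expect to rule out the possibility of bad pairs with pairwise disjoint symmetric differences: such a configuration would produce two $(u,w)$-avoiding paths in $G'$ with identical $T'$-trace, contradicting $T'$'s tracking property. A concrete case analysis over whether $Q$ misses $u$, misses $w$, or traverses both via an alternate subpath should then identify one of the three candidates as a uniform distinguisher.
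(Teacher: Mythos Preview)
Your bijection and the forward direction are exactly the paper's, and your reduction of the backward direction to ``replace $v$ by a single vertex $y$ chosen from $\{u,w\}$ or from an alternate $u$--$w$ subpath'' is also the paper's strategy. The gap is in how you propose to find $y$. Your case split is \emph{per bad pair} (does $Q$ miss $u$, miss $w$, or route through both), but a universal distinguisher must handle all bad pairs simultaneously, and different bad pairs can be of different types. Your intermediate claim---that bad pairs with pairwise disjoint symmetric differences would force two $(u,w)$-avoiding paths with identical $T'$-trace---is not justified: the two avoiding paths $Q_1,Q_2$ from distinct bad pairs lie in distinct $T_0$-trace classes by your own counting argument, so their $T'$-traces already differ. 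And even if all symmetric differences pairwise intersected, that would not give a common element.

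The paper closes this gap with a \emph{global} dichotomy rather than a per-pair one. Case~1: some $u$--$w$ path $Q$ in $G'$ (other than $u,v,w$) has no internal vertex in $T_0$. Then \emph{any} internal vertex $v'$ of $Q$ works universally: if $(P_1,P_2)$ were a bad pair for $T_0\cup\{v'\}$ with $v\in V(P_1)$, acyclicity forces $v'\notin V(P_1)$, hence $v'\notin V(P_2)$; replacing $u,v,w$ in $P_1$ by $Q$ yields a path $P_1'$ with $V(P_1')\cap T' = V(P_2)\cap T'$, contradicting that $T'$ tracks $G'$. Case~2: no such $T_0$-free $u$--$w$ path exists. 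Then one of $T_0\cup\{u\}$, $T_0\cup\{w\}$ is a tracking set. If neither is, take a bad pair $(P_1,P_2)$ for $T_0\cup\{u\}$ and a bad pair $(Q_1,Q_2)$ for $T_0\cup\{w\}$; the assumption of Case~2 forces $u,w\notin T_0$ and lets one splice the tracker-free pieces of $P_1,P_2,Q_1,Q_2$ into two distinct $s$--$t$ paths $R_1,R_2$, neither containing $v$, with $V(R_1)\cap T'=V(R_2)\cap T'$---again contradicting that $T'$ tracks $G'$. This is the piece your sketch is missing.
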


\begin{proof}
 There is a one-to-one correspondence between the directed \stpaths in $G$ and the directed \stpaths in $G'$.
 Hence, if $T$ is a solution for $(G,s,t,k)$, then it is also a solution for $(G',s,t,k)$.
 Similarly, if $T$ is a solution for $(G',s,t,k)$ that does not contain $v$, then it is also a solution for  $(G,s,t,k)$.
 Hence, assume that there is a solution $T$ for $(G',s,t,k)$ which contains $v$.
 Our aim is to show that there is also a solution $T'$ for $(G',s,t,k)$ which does not contain $v$ and hence is also a solution for $(G,s,t,k)$.

 Suppose first that there is a directed $u$-$w$ path $Q$ in $G'$ which does not contain vertices of $T$ as internal vertices.
 Note that, as $v \in T$, this path does not contain~$v$.
 If $Q$ is the single edge $(u,w)$, then there are parallel edges in $G$ that contradict our assumptions.
 Hence, $Q$ contains at least one internal vertex.
 Let $v'$ be an arbitrary internal vertex of $Q$.
 We claim that $T'=(T \setminus \{v\}) \cup \{v'\}$ is a solution for $(G',s,t,k)$.

 \begin{figure}[h]
  \centering
  \includegraphics[scale=1.2,page=1]{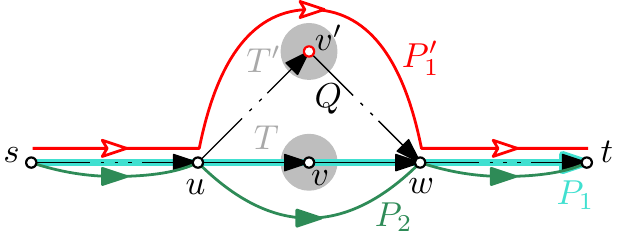}
  \caption{
   The case where there exists a directed $u$-$w$ path in $G'$ which does not contain vertices of $T$ as internal vertices.
  }%
  \label{fig:edge_subdivision_proof_case1}
 \end{figure}

 Assume it is not and let $P_1$ and $P_2$ be two different \stpaths such that $V(P_1) \cap T' = V(P_2) \cap T'$, see Figure~\ref{fig:edge_subdivision_proof_case1}.
 Since $V(P_1) \cap T \neq V(P_2) \cap T$, exactly one of the paths contains~$v$.
 Without loss of generality, assume that $v \in V(P_1)$ and $v \notin V(P_2)$.
 Note that $v' \notin V(P_1)$ since the graph does not have directed cycles.
 Hence, $v' \notin V(P_2)$ as $V(P_1) \cap T' = V(P_2) \cap T'$.
 Therefore $V(P_2) \cap T = V(P_2) \cap T'$.
 Now consider the path $P'_1$ obtained from $P_1$ by replacing the subpath $u,v,w$ with $Q$.
 Since $Q$ does not contain vertices of $T$ as internal vertices, we have $V(P'_1) \cap T = V(P_1) \cap T'$.
 It follows that 
 \[
  V(P'_1) \cap T = V(P_1) \cap T' = V(P_2) \cap T' = V(P_2) \cap T
 \]
 which contradicts that $T$ is a solution for $(G',s,t,k)$.

 Now assume that there is no directed $u$-$w$ path in $G'$ that does not contain vertices of $T$ as internal vertices, see Figure~\ref{fig:edge_subdivision_proof_case2}.
 Consider the set $T_u = (T \setminus \{v\}) \cup \{u\}$ and the set $T_w = (T \setminus \{v\}) \cup \{w\}$.
 We will show that if neither $T_u$ nor $T_v$ is a solution in $(G',s,t,k)$, then $T$ is not a solution in $(G',s,t,k)$, a contradiction.

 \begin{figure}[h]
  \centering
  \includegraphics[width=.98\textwidth,page=2]{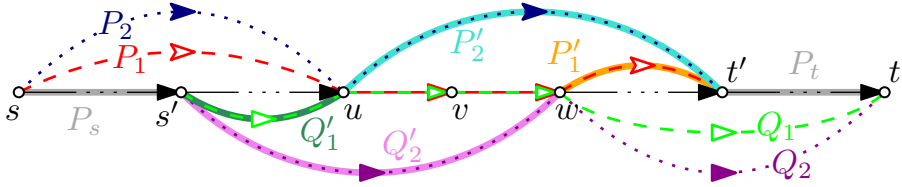}
  \caption{
   The case where there is no path from $u$ to $w$ which does not contain vertices of $T$.
   We show that a path through $Q'_2$ and $P'_1$ has the same trackers in $T$ as the one through $Q'_1$ and $P'_2$.
  }%
  \label{fig:edge_subdivision_proof_case2}
 \end{figure}

 If $T_u$ is not a solution, then there are two \stpaths $P_1$ and $P_2$ such that $V(P_1) \cap T_u = V(P_2) \cap T_u$.
 Since $V(P_1) \cap T \neq V(P_2) \cap T$, exactly one of the paths, say $P_1$, contains $v$.
 This implies $u \in V(P_1)$ and, as $u \in T_u$, also $u \in V(P_2)$.
 Let $t'$ be the first common vertex of the paths $P_1$ and $P_2$ after $u$.
 Let $P'_1$ be the part of the path $P_1$ between $w$ and $t'$, $P'_2$ be the part of the path $P_2$ between $u$ and $t'$ (note the slight asymmetry).
 Since $V(P_1) \cap T_u = V(P_2) \cap T_u$, neither $P'_1$ nor $P'_2$ contains vertices of $T$ as internal vertices.
 By assumption, this implies $t' \neq w$, as otherwise $P'_2$ would be a $u$-$w$ path.
 This, in turn, implies $w \notin T$.

 Now, the same argument could be repeated on a graph with swapped labels $u$ and $w$, $s$ and $t$, and where direction of all edges is reversed.
 We present it for completeness.
 If $T_w$ is not a solution, then there are two \stpaths $Q_1$ and $Q_2$ such that $V(Q_1) \cap T_w = V(Q_2) \cap T_w$.
 Since $V(Q_1) \cap T \neq V(Q_2) \cap T$, exactly one of the paths, say $Q_1$, contains $v$.
 This implies $w \in V(Q_1)$ and, as $w \in T_w$, also $w \in V(Q_2)$.
 Let $s'$ be the last common vertex of the paths $Q_1$ and $Q_2$ before $w$.
 Let $Q'_1$ be the part of the path $Q_1$ between $s'$ and $u$, $Q'_2$ be the part of the path $Q_2$ between $s'$ and $w$ (again the slight asymmetry).
 Since $V(Q_1) \cap T_w = V(Q_2) \cap T_w$, neither $Q'_1$ nor $Q'_2$ contains vertices of $T$ as internal vertices.
 By assumption, this implies $s' \neq u$.
 This, in turn, implies $u \notin T$.

 Let $P_s$ be any path from $s$ to $s'$, e.g., a part of $Q_1$ and
 let $P_t$ be any path from $t'$ to $t$, e.g., a part of $P_1$.
 Let $R_1$ be the \stpath formed by the concatenation of $P_s$, $Q'_1$, $P'_2$, and $P_t$;
 also let $R_2$ be formed by the concatenation of $P_s$, $Q'_2$, $P'_1$, and $P_t$.
 Note that $R_1$ goes through $u$ and $R_2$ goes through $w$, but neither of them contains $v$.
 These paths only differ in the part between $s'$ and $t'$.
 Namely, $R_1$ contains $Q'_1$ and $P'_2$ while $R_2$ contains $Q'_2$ and $P'_1$.
 As argued above, neither of $P'_1,P'_2,Q'_1,Q'_2$ contains a vertex of $T$ as an internal vertex.
 Moreover, neither $u$ nor $w$ is in $T$.
 It follows that $V(R_1) \cap T=V(R_2) \cap T$ which contradicts that $T$ is a solution for $(G',s,t,k)$.
 Therefore, $T_u$ or $T_w$ is a solution for $(G',s,t,k)$ which finishes the proof.
\end{proof}

\begin{lemma}\label{lem:tpdag-to-tsp}
 There is a polynomial parameter transformation from \tpdag{} to \tsp{} mapping instances $(G,s,t,k)$ to instances $(G',s,t,k)$ such that $|V(G')|, |E(G')|\leq |E(G)|\cdot|V(G)|$.
 %
\end{lemma}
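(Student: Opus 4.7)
The plan is to subdivide the edges of $G$ carefully so that every directed $s$-$t$ path acquires the same length, and then to drop the orientation, so that these paths become precisely the shortest $s$-$t$ paths in the resulting undirected graph. Concretely, for each $v \in V(G)$ let $d(v)$ denote the length of a longest directed path from $s$ to $v$ in $G$; this is well-defined since $G$ is a DAG, satisfies $d(s)=0$, and is bounded by $|V(G)|-1$. For each edge $(u,v) \in E(G)$ I would replace it by a directed path from $u$ to $v$ consisting of $d(v)-d(u)$ edges (inserting $d(v)-d(u)-1$ fresh subdivision vertices; note that $d(v)>d(u)$ because $(u,v)$ itself lifts any $s$-$u$ path to an $s$-$v$ path). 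Call the resulting directed graph $\tilde G$ and let $G'$ be its underlying undirected graph.

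Correctness splits into two steps. First, iterated application of Lemma~\ref{lem:subdivide} shows that $(\tilde G, s, t, k)$ is an equivalent instance of \tpdag to $(G,s,t,k)$; the assumption of no parallel edges is preserved at every step, since each subdivision introduces a fresh degree-$2$ vertex. Second, I would show that the shortest $s$-$t$ paths of the undirected graph $G'$ are exactly the directed $s$-$t$ paths of $\tilde G$. By construction every directed $s$-$t$ path of $\tilde G$ has length $d(t)$. Conversely, any simple $s$-$t$ path $P$ in $G'$ decomposes into a sequence of super-edges (since subdivision vertices have degree $2$ in $G'$), each traversed either forward or backward relative to the DAG orientation. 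Writing $F$ and $B$ for these forward and backward traversals, telescoping the potentials $d(\cdot)$ at the original vertices along $P$ yields
\[
 d(t) \;=\; \sum_{(u,v)\in F}\bigl(d(v)-d(u)\bigr) \;-\; \sum_{(u,v)\in B}\bigl(d(v)-d(u)\bigr),
\]
while the length of $P$ equals $\sum_{(u,v)\in F \cup B}(d(v)-d(u))$. Subtracting gives $\mathrm{len}(P) = d(t) + 2\sum_{(u,v)\in B}(d(v)-d(u)) \ge d(t)$, with equality exactly when $B = \emptyset$, i.e., when $P$ corresponds to a directed $s$-$t$ path of $\tilde G$. Consequently a set $T \subseteq V(G') = V(\tilde G)$ is a tracking set of the \tsp instance $(G',s,t,k)$ if and only if it is a tracking set of the \tpdag instance $(\tilde G,s,t,k)$, and $k$ is unchanged.

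For the size bound, each of the $|E(G)|$ original edges becomes a path of length at most $d(t) \le |V(G)|-1$, so $|E(G')| \le |E(G)| \cdot |V(G)|$; since every subdivision adds exactly one vertex, $|V(G')| \le |V(G)| + |E(G)|(|V(G)|-1) \le |E(G)| \cdot |V(G)|$ using $|V(G)| \le |E(G)|+1$, which holds once Reduction Rule~\ref{red:useful-dag} has been applied (the underlying graph is then connected). The whole construction is polynomial-time and $k$ is preserved, so it constitutes a polynomial parameter transformation. I expect the main obstacle to be the second half of correctness: ruling out an undirected ``shortcut'' in $G'$ that traverses some super-edge backward. The telescoping identity above is the key ingredient, and it relies crucially on the choice of $d(\cdot)$ so that the subdivided length of each original edge equals its potential increment, making backward detours strictly costly.
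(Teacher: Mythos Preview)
Your construction is exactly the one in the paper: compute the longest-path layer $L(v)=d(v)$, subdivide each edge $(u,v)$ into a path of length $d(v)-d(u)$, invoke Lemma~\ref{lem:subdivide} for the \tpdag equivalence, and pass to the underlying undirected graph. The paper dispatches your ``main obstacle'' more briskly than your telescoping argument by noting that after subdivision every edge of $\tilde G$ joins two \emph{consecutive} layers, so any undirected $s$--$t$ walk changes layer by $\pm 1$ at each step and hence has length at least $d(t)$, with equality only for the monotone (i.e., directed) paths; your potential computation is just this observation unrolled.
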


\begin{proof}
Let $(G=(V,E),s,t,k)$ be an instance of \tpdag.
We first compute the prospective layer $L(v)$ for each vertex $v$ by the following procedure.
Start by assigning $L(s) =0$ for $s$ (and all other sources if there are any; there are none in our kernels).
For all other vertices $v$ in topological order, assign $L(v)= 1+ \max \{L(u) \mid u \in N^-(v)\}$.
Note that, as we always only add $1$ for each vertex, we have $\max \{L(v) \mid v \in V\} \le |V|-1$.

Next, we create an auxiliary directed graph $G''$ from $G$ as follows.
For each edge $(u,v)$ with $L(u)+1 < L(v)$ we subdivide the edge $L(v)-L(u)-1$ times to obtain a path with $L(v)-L(u)-1$ internal vertices.
In $G''$ every directed \stpath{} has length exactly $L(t)$.
Also it follows from Lemma~\ref{lem:subdivide} that $(G'',s,t,k)$ is a YES instance of \tpdag if and only if $(G,s,t,k)$ is a YES instance of \tpdag.

Let $G'$ be the underlying undirected graph of $G''$.
Every \stpath{} in $G'$ is of length at least $L(t)$ and all paths of length exactly $L(t)$ are actually directed paths in $G''$, since the edges of $G''$ only connect consecutive layers.
Hence, the shortest \stpaths{} in $G'$ are exactly the directed \stpaths{} in $G''$.


Hence,  $(G',s,t,k)$ is a YES instance of \tsp if and only if $(G,s,t,k)$ is a YES instance of \tpdag.
Observe that for each edge of $G$ the corresponding path in $G'$ contains at most $|V|-2$ new vertices and edges.
Since each path can be of length at most $V(G)$, the number of vertices and edges in $G'$ is at most $|E(G)|\cdot|V(G)|$.
\end{proof}

\begin{theorem}
 \tsp admits a kernel with $\mathcal{O}(k^4)$ vertices and edges, where $k$ is the size of a solution.
 \tsp in planar graphs admits a kernel with $\mathcal{O}(k^2)$ vertices and edges.
\end{theorem}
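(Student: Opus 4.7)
The plan is to chain together three results already established: the reduction from \tsp to \tpdag from~\cite{tr1-j} (mentioned in the introduction and at the start of Section~\ref{sec:ppt}), the \tpdag kernels of Theorems~\ref{thm:tsp_kernel_regular} and~\ref{thm:tsp_kernel_planar}, and the polynomial parameter transformation in the opposite direction given by Lemma~\ref{lem:tpdag-to-tsp}. Given a \tsp instance $(H,s,t,k)$, I would first reduce it to an equivalent \tpdag instance via the \cite{tr1-j} reduction, then shrink this instance with the \tpdag kernelization, and finally translate the kernelized \tpdag instance back into a \tsp instance using Lemma~\ref{lem:tpdag-to-tsp}.

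For the general-graph statement, the middle step produces a \tpdag instance $(G,s,t,k)$ with $|V(G)|,|E(G)| \in \mathcal{O}(k^2)$ by Theorem~\ref{thm:tsp_kernel_regular}. Plugging these bounds into the size estimate $|V(G')|,|E(G')| \le |E(G)|\cdot|V(G)|$ from Lemma~\ref{lem:tpdag-to-tsp} yields a \tsp instance of size $\mathcal{O}(k^4)$, which is the first half of the theorem. It is important to record that all three steps leave the budget $k$ unchanged, so the overall composition is genuinely a polynomial kernelization, not just a polynomial reduction.

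For the planar case, the same three steps work with Theorem~\ref{thm:tsp_kernel_planar} substituted for Theorem~\ref{thm:tsp_kernel_regular}, giving an intermediate \tpdag kernel of size $\mathcal{O}(k)$ and hence a final \tsp instance of size $\mathcal{O}(k)\cdot \mathcal{O}(k) = \mathcal{O}(k^2)$. The extra thing to check here is that planarity survives each transformation: the \tsp-to-\tpdag reduction just removes useless vertices/edges and orients the surviving ones away from $s$; the planar \tpdag kernelization of Section~\ref{sec:linear-kernel} works entirely within the planar class; and the construction in the proof of Lemma~\ref{lem:tpdag-to-tsp} only subdivides edges and takes the underlying undirected graph, both of which preserve planarity.

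The main point that needs deliberate verification is this planarity-preservation chain, because neither Lemma~\ref{lem:tpdag-to-tsp} nor the \cite{tr1-j} reduction advertises planarity as part of their statements; however, inspecting the operations involved (deletion, reorientation, subdivision) shows that the property is maintained throughout, so this is more a matter of bookkeeping than of genuine technical difficulty. Beyond that, the entire argument is a straightforward assembly of previously proved components.
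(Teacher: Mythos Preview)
Your proposal is correct and follows essentially the same approach as the paper's proof, which simply invokes Lemma~\ref{lem:tpdag-to-tsp} together with Theorems~\ref{thm:tsp_kernel_regular} and~\ref{thm:tsp_kernel_planar} to square the kernel sizes. You are in fact more explicit than the paper about the three-step composition and about verifying planarity preservation along the chain; the paper's own proof glosses over the latter point entirely.
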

\begin{proof}
 Lemma~\ref{lem:tpdag-to-tsp} implies that if there exists a kernel of size $X$ for an instance of \tpdag, then there exists a kernel of size $X^2$ for the corresponding instance of \tsp.
 Theorem~\ref{thm:tsp_kernel_regular} gives a kernel for \tpdag with $\mathcal{O}(k^2)$ vertices and edges, and Theorem~\ref{thm:tsp_kernel_planar} gives a kernel with $\mathcal{O}(k)$ vertices and edges in planar graphs.
 Applying the implied bound to these kernels for \tpdag yields the desired kernels for \tsp.
\end{proof}

\section*{Declaration of competing interest}
The authors declare that they have no known competing financial interests or personal relationships that could have appeared to influence the work reported in this paper.

\section*{Acknowledgements}
This work was supported by the Grant Agency of the Czech Technical University in Prague, grant \mbox{No.~SGS20/208/OHK3/3T/18}. The authors acknowledge the support of the OP VVV MEYS funded project CZ.02.1.01/0.0/0.0/16\_019/0000765 ``Research Center for Informatics''.




\bibliographystyle{elsarticle-num-names}


\bibliography{ref}

%
%
%
\end{document}